\theoremstyle{plain}
\newtheorem{thm}{Theorem}[section]
\newtheorem{cor}[thm]{Corollary}
\newtheorem{lem}[thm]{Lemma}
\newtheorem{defn}[thm]{Definition}
\theoremstyle{remark}
\newtheorem{rem}{Remark}[section]
\numberwithin{equation}{section}
\numberwithin{algorithm}{section}
\begin{document}


\title{\textbf{STRUCTURE AND DYNAMICS OF POLYNOMIAL DYNAMICAL SYSTEMS
}\footnote{
This work was supported by the National Science Foundation under Grant Nr. CMMI-0908201.}}

\author{\hspace{3 in}\textbf{Reinhard Laubenbacher}\\\hspace{3.4 in}Virginia Bioinformatics Institute\\\hspace{2.5in} \textbf{David Murrugarra}\\ \hspace{3.4 in}Virginia Bioinformatics Institute\\\hspace{2.3 in}
\textbf{Alan Veliz-Cuba}\\\hspace{3.3 in}University of Nebraska Lincoln}
\date{}
\maketitle
\thispagestyle{fancy} 

\begin{center}
Abstract
\end{center}
Discrete models have a long tradition in engineering, including finite state machines,
Boolean networks, Petri nets, and agent-based models. Of particular importance is the
question of how the model structure constrains its dynamics. 
This paper discusses an algebraic framework to study such questions.  The systems discussed here
are given by mappings on an affine space over a finite field, whose coordinate functions are polynomials. 
They form a general class of models which can represent many discrete model types. 
Assigning to such a system its dependency graph, that is, the directed graph that indicates the variable
dependencies, provides a mapping from systems to graphs. A basic property of this mapping is derived
and used to prove that dynamical systems with an acyclic dependency graph can only have a unique
fixed point in their phase space and no periodic orbits. This result is then applied to a published model
of in vitro virus competition.

\section{Introduction}
In several areas of engineering, discrete mathematical models, such
as finite state machines, Boolean networks, Petri nets, or agent-based
models, play an
important role in modeling processes that can be viewed as evolving
in discrete time, in which state variables have only finitely many
possible states. Decision processes, electrical switching networks, 
or intra-cellular molecular networks represent examples.  There is a long
tradition in the engineering literature of work related to an understanding
of how the structure of such models constrains their dynamics, 
see, e.g., \cite{Elspas, Butler}. This problem is important both for the design
of networks as well as for their analysis. 
In both theoretical and applied studies of dynamical systems, the problem
of predicting dynamic features of the system from structural properties
is very important, see, e.g., \cite{Golubitsky, Soule}. An instantiation
of this problem is the question in systems biology whether the 
connectivity structure of molecular networks, such as metabolic or
gene regulatory networks, has special features that correlate with
the type of dynamics these networks exhibit. (This type of application
motivated the present study.) Some progress has
been made on this question. For instance, in \cite{Alon} it was
shown that certain small network motifs appear much more often
in such networks than would be expected in a random graph. 
The work in this paper was motivated by the desire to provide a
theoretical framework within which the relationship between
structure and dynamics of certain families of dynamical systems
can be studied. We briefly describe this framework and show one
example of a result and its implications. 

The type of dynamical system studied here is discrete in time as well
as in variable states. That is, we consider a collection $x_1, \ldots , x_n$ of
variables, each of which can take on values in a finite set $X$.
We consider time-discrete dynamical systems
$$
f: X^n\longrightarrow X^n.
$$
A well-known instantiation of such systems are Boolean networks,
which have many applications in molecular biology as well as 
engineering. In this case, $X = \{0, 1\}$.
As described, such systems are set functions without
additional mathematical structure. It is therefore advantageous to
impose additional mathematical structure on $X$, namely that of a 
finite field. This is of course utilized in the case of Boolean networks.
Evaluation of Boolean functions is equivalent to carrying out 
arithmetic in the Boolean field with two elements. The translation
between Boolean functions and polynomials 
is straight-forward, with the Boolean operator AND corresponding
to multiplication, addition corresponding to XOR, and negation
corresponding to addition of $1$.  Once we have a finite field structure
on $X$, which we shall now denote by $\mathbb K$, it is well-known
that any function $\mathbb K^n \longrightarrow \mathbb K$ can
be represented as a polynomial function \cite[p. 369]{Lidl}. Thus, we can focus
on systems
$$
f = (f_1, \ldots , f_n): \mathbb K^n\longrightarrow \mathbb K^n,
$$
for which each $f_i$ is a polynomial function in the variables $x_1, \ldots , x_n$.
We will call such a system a polynomial dynamical system (PDS) over $\mathbb K$.
A powerful consequence is that one can use algorithms and theoretical
results from computer algebra and computational algebraic geometry, such
as the theory of Gr\"obner bases. 

 It has been shown \cite{Veliz-Cuba, Hinkelmann} that discrete models in several different
 frameworks can be translated into the PDS framework, namely $k$-bounded Petri nets,
 so-called logical models used in molecular biology, and many agent-based models,
 which are becoming very prominent in biology. Thus, the framework of PDS provides
 provides a common theoretical formulation. Using algorithms from polynomial algebra,
 the software package ADAM (Analysis of Dynamic Algebraic Models) \cite{Hinkelmann2} 
 is very efficient in analyzing various features of PDS, including their dynamics. 
 Thus, any theoretical results about PDS apply to all these model types. 
 
 In this paper, we are concerned with the family $\mathcal P$ of all PDS of a given dimension $n$ over
 a finite field $\mathbb K$. The structural information for a PDS includes a directed
 graph that indicates the dependency relationships between variables. In the context
 of a molecular network model, this graph would represent the wiring diagram of the
 network. Graphs can be represented by their adjacency matrices. The set of $n\times n$
 matrices carries an algebraic structure using max and min operations, which has
 been studied previously over the real field, see, e.g., \cite{Gavalec}. 
 Here we prove a result about the relationship between properties
 of dynamical systems in $\mathcal P$ under composition and properties of the
 corresponding adjacency matrices in the max-min algebra.  As a consequence of
 this result, we derive a dynamic property of PDS with acyclic dependency graphs. 
 Finally, we give an application of this consequence to a published model of {\it in vitro}
 virus competition. 

\section{The algebra of dynamical systems}

Let $\mathbb{K}$ be any finite field. We consider dynamical systems
$$
f=(f_1,\ldots,f_n):\mathbb{K}^n\rightarrow
\mathbb{K}^n,
$$
where
$f_i:\mathbb{K}^n\rightarrow \mathbb{K}$ for $i=1,\ldots,n$.
As observed above, any such $f$ is a PDS. 
Let $\mathcal P$ denote the family of all such systems. It is well-known
that $\mathcal P$ has the structure of an associative algebra under
coordinate-wise addition and composition of functions. 

To $f$ we can associate a directed graph with the $n$ nodes 
$x_1, \ldots , x_n$. There is a directed edge from 
$x_j$ to $x_i$ if $x_j$ appears in $f_i$. Let $[f]$ be the
$n\times n$ adjacency matrix of this graph. That is, 
$[f]=(a_{ij})$ is defined as follows:
\begin{displaymath}
a_{ij}=\left\{
\begin{array}{ll}
 1 &f_i\text{ depends on } x_j  \\
  0 &\text{otherwise}   
\end{array}
\right.
\end{displaymath}

Equivalently,
$a_{ij}=1$ if and only if there are $p\neq q\in\mathbb{K}$ such that
\begin{displaymath}
f_i(x^{(j,p)}) \neq f_i(x^{(j,q)})
\end{displaymath}
where $x^{(j,p)} = (x_1,\dots,x_{j-1},p,x_{j+1},\dots,x_n)\in\mathbb{K}^n$.

The adjacency matrix has binary entries, and we now define two operations on 
such matrices. Let $\mathbb{B}$ denote the Boolean field with two elements,
with the natural order $0 < 1$. 

\begin{defn}
Given $a,b\in\mathbb{B}$, we let:
\begin{description}
\item[1)] $a\oplus b=\max\{a,b\}$ and
\item[2)] $a\star b=\min\{a,b\}$
\end{description}
\end{defn}

\begin{defn}
Given $A=(a_{ij})$ and $B=(b_{ij})$ matrices with entries in
$\mathbb{B}$, we define
the following operation:\\
$A\star B=C=(c_{ij})$, where $c_{ij}=\bigoplus^n_{k=1}(a_{ik}\star
b_{kj})$
\end{defn}

\begin{rem}
$c_{ij}=1$ if and only if there is $k$ such that $a_{ik}\star b_{kj}=1$
\end{rem}

\begin{defn}
We define $A\preceq B$ if and only if $a_{ij}\leq b_{ij}$ for all
i,j.
\end{defn}

Forming the adjacency matrix of the dependency graph of a PDS
then gives a mapping
$$
\mathcal{P}\longrightarrow \mathcal{M}. 
$$And we also have a mapping that
associates to an element in $\mathcal{P}$ its phase space, a directed
graph on the vertex set $|\mathbb{K}^n|$, which encodes the
dynamics of the system. We will denote by $\mathcal{S}$
the set of all directed graphs on $|\mathbb{K}^n|$ with the property that
each vertex has a unique outgoing edge (the requirement for being the
phase space of a deterministic PDS). Hence, we have mappings
$$
\mathcal{S}\longleftarrow \mathcal{P}\longrightarrow \mathcal{M}.
$$
The result in the next section relates information in $\mathcal{M}$ to
information in $\mathcal{S}$.
\section{A property of $\mathcal{P}\longrightarrow\mathcal{M}$}

The main result of this paper is the following theorem. It describes a basic property
of the mapping that extracts the adjacency matrix from the system. 

\begin{thm}\label{thm1}
We have that
$[f\circ g]\preceq[f]\star[g]$ in $\mathcal{M}$ for all $f,\ g\in \mathcal{P}$.

\end{thm}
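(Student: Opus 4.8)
The plan is to unwind the definitions of both sides and show that each nonzero entry on the left forces the corresponding entry on the right to be nonzero. Write $h = f \circ g$, so $h_i(x) = f_i(g_1(x), \ldots, g_n(x))$. Recall that $[h]_{ij} = 1$ precisely when $h_i$ genuinely depends on $x_j$, i.e.\ when there exist points differing only in coordinate $j$ on which $h_i$ takes different values. On the other side, by Definition of $\star$ and the following Remark, $([f]\star[g])_{ij} = 1$ if and only if there is some intermediate index $k$ with $[f]_{ik} = 1$ and $[g]_{kj} = 1$; that is, $f_i$ depends on its $k$-th argument and $g_k$ depends on $x_j$. So the entire content of the theorem is the implication: if $f_i \circ (g_1,\ldots,g_n)$ depends on $x_j$, then there is a $k$ such that $f_i$ depends on its $k$-th input and $g_k$ depends on $x_j$.

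I would prove the contrapositive, which is cleaner. Suppose that for every $k$, either $f_i$ does not depend on its $k$-th argument or $g_k$ does not depend on $x_j$. Let $S = \{k : g_k \text{ depends on } x_j\}$; by assumption, $f_i$ does not depend on its $k$-th argument for any $k \in S$. Now take two points $x^{(j,p)}$ and $x^{(j,q)}$ of $\mathbb{K}^n$ differing only in coordinate $j$. For $k \notin S$, $g_k$ does not depend on $x_j$, so $g_k(x^{(j,p)}) = g_k(x^{(j,q)})$. Thus the two image vectors $g(x^{(j,p)})$ and $g(x^{(j,q)})$ can differ only in coordinates lying in $S$. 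Since $f_i$ is insensitive to every coordinate in $S$, changing those coordinates does not change the value of $f_i$, hence $f_i(g(x^{(j,p)})) = f_i(g(x^{(j,q)}))$, i.e.\ $h_i(x^{(j,p)}) = h_i(x^{(j,q)})$. As $p, q$ and the other coordinates were arbitrary, $h_i$ does not depend on $x_j$, so $[h]_{ij} = 0 \leq ([f]\star[g])_{ij}$. Combined with the trivial inequality $0 \leq 0$ when $[h]_{ij}=0$ and the fact that entries are bounded by $1$, this gives $[h] \preceq [f]\star[g]$ entrywise, which is the claim.

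The only subtle point — and the step I expect to require the most care — is the clean statement and repeated use of "insensitivity" under composition: one must argue that if $f_i$ does not depend on any coordinate in a set $S$, then $f_i$ is constant along the entire subspace obtained by varying all coordinates in $S$ simultaneously, not just one at a time. This follows by changing the coordinates in $S$ one at a time and invoking the single-variable non-dependence $|S|$ times, but it should be spelled out (perhaps as a one-line lemma) since the definition of dependence is given only coordinatewise via the $x^{(j,p)}$ notation. Everything else is a direct translation between the combinatorial condition defining $[f]\star[g]$ (via the Remark) and the dynamical condition defining $[f\circ g]$.
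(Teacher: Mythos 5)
Your proposal is correct and is essentially the paper's argument run in the contrapositive direction: the paper assumes $[f\circ g]_{ij}=1$ and directly extracts an index $k$ with $a_{ik}=b_{kj}=1$, while you assume no such $k$ exists and conclude $[f\circ g]_{ij}=0$, both hinging on the same observation that $g(x^{(j,p)})$ and $g(x^{(j,q)})$ can differ only in coordinates on which $g$ depends through $x_j$. The ``one coordinate at a time'' lemma you flag is indeed the only delicate point, and it is needed (and left implicit) in the paper's direct version as well, so your version is if anything slightly more careful.
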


\begin{proof}

Let $A$, $B$, and $C$ be the adjacency matrices for $f$, $g$, and $f\circ g$, respectively.
It suffices to prove that, if $c_{ij}=1$, then $a_{ik}b_{kj}=1$ for some $k$.
Let us assume that $c_{ij}=1$, then there are $p\neq q\in\mathbb{K}$ such that
\begin{displaymath}
(f\circ g)_i(x^{(j,p)}) \neq (f\circ g)_i(x^{(j,q)}),
\end{displaymath}
where $x^{(j,p)} = (x_1,\ldots,x_{j-1},p,x_{j+1},\ldots,x_n)\in\mathbb{K}^n$.
Now, since $(f\circ g)_i = f_i\circ g$ we have that
\begin{displaymath}
f_i(y_1,\dots,y_n) \neq f_i(\hat{y}_1,\dots,\hat{y}_n),
\end{displaymath}
where $y_s = g_s(x^{(j,p)})$ and $\hat{y}_s = g_s(x^{(j,q)})$ for $s = 1,\dots,n$. 
Then there is an index $k$ such that $f_i$ depends on $x_k$ and
\begin{displaymath}
y_k\neq\hat{y}_k,
\end{displaymath}
or
\begin{displaymath}
g_k(x^{(j,p)})\neq g_k(x^{(j,q)}).
\end{displaymath}
Therefore, $a_{ik}=1$ and $b_{kj}=1$ and, consequently, $a_{ik}b_{kj}=1$.
\end{proof}

\begin{cor}\label{ImporCor}
$[f^r]\preceq[f]^r$ for all $f\in \mathcal{P}$, and for any $r\geq1$.
\end{cor}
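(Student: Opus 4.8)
The plan is to induct on $r$, with Theorem \ref{thm1} doing all the real work. The base case $r=1$ is immediate, since $[f^1]=[f]=[f]^1$. For the inductive step, assume $[f^{r-1}]\preceq[f]^{r-1}$. Writing $f^r=f\circ f^{r-1}$ and applying Theorem \ref{thm1} gives
$[f^r]=[f\circ f^{r-1}]\preceq[f]\star[f^{r-1}]$.
It then remains to show $[f]\star[f^{r-1}]\preceq[f]\star[f]^{r-1}=[f]^r$, i.e.\ to propagate the inductive hypothesis $[f^{r-1}]\preceq[f]^{r-1}$ through a left $\star$-multiplication by $[f]$.

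This last step rests on a monotonicity property of the matrix operation $\star$ that I would record as a short lemma: if $B\preceq B'$ then $A\star B\preceq A\star B'$ for conformable binary matrices $A,B,B'$. This is immediate from the definitions: the $(i,j)$ entry of $A\star B$ is $\bigoplus_{k=1}^{n}(a_{ik}\star b_{kj})$, and since $\min$ is monotone in each of its arguments we have $a_{ik}\star b_{kj}\leq a_{ik}\star b'_{kj}$ whenever $b_{kj}\leq b'_{kj}$; taking $\max$ over $k$ preserves the inequality. (Only monotonicity in the right argument is needed here, though the analogous statement for the left argument holds as well.) Applying this lemma with $A=[f]$, $B=[f^{r-1}]$, $B'=[f]^{r-1}$ closes the induction: $[f^r]\preceq[f]\star[f^{r-1}]\preceq[f]\star[f]^{r-1}=[f]^r$.

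I do not expect a substantial obstacle: the entire content of the corollary is the bootstrapping of Theorem \ref{thm1}, and the only point requiring any care is a minor bookkeeping one, namely that $[f]^r$ should be interpreted with a fixed parenthesization of the $\star$-products. Using the recursive reading $[f]^r:=[f]\star[f]^{r-1}$ avoids invoking associativity of $\star$ altogether, although max--min matrix multiplication is in fact associative, so the choice of parenthesization is immaterial and the statement is unambiguous either way.
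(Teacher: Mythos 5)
Your proof is correct and follows the same route as the paper, which simply says ``replace $g$ by $f$ in Theorem~\ref{thm1}'' and leaves the induction implicit. You have in fact been more careful than the paper: the monotonicity lemma for $\star$ (needed to pass from $[f]\star[f^{r-1}]$ to $[f]\star[f]^{r-1}$) is exactly the detail the paper's one-line proof glosses over, and your treatment of it is right.
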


\begin{proof}
Replace $g$ by $f$ in Theorem~\ref{thm1}.
\end{proof}

\section{PDS with acyclic dependency graph}
As an easy application of our main result we now consider PDS in $\mathcal{P}$
that have an acyclic dependency graph, that is, no feedback loops in their structure.
For any $f:\mathbb{K}^n\rightarrow \mathbb{K}^n$ with acyclic dependency
graph  we can see easily that its adjacency matrix $[f]$ is a strictly triangular matrix, i.e., with
zeros in the diagonal (otherwise, there would be loops in the graph). Therefore, its
characteristic polynomial is equal to $\lambda^r$, where $r$ is the order of
the matrix. So $[f]$ is nilpotent and $[f]^r=0$.

\begin{cor}\label{main}

Any discrete dynamical system $f:\mathbb{K}^n\rightarrow \mathbb{K}^n$ with acyclic dependency graph has a unique fixed point.

\end{cor}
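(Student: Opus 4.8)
The plan is to derive everything from Corollary~\ref{ImporCor}, combined with the observation recorded just above the statement that $[f]$ is nilpotent with $[f]^n=0$, where $n$ is the dimension. Applying Corollary~\ref{ImporCor} with $r=n$ gives $[f^n]\preceq[f]^n=0$, hence $[f^n]=0$; that is, the iterate $f^n$ has empty dependency graph. The one step that needs a little care is the passage from ``zero adjacency matrix'' to ``constant map'': if the $i$-th coordinate function of $f^n$ depends on no variable, then, using the characterization that $a_{ij}=1$ iff $(f^n)_i(x^{(j,p)})\neq(f^n)_i(x^{(j,q)})$ for some $p\neq q$, one changes the coordinates $x_1,\dots,x_n$ one at a time and concludes that $(f^n)_i$ is constant on all of $\mathbb{K}^n$. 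Thus $f^n\equiv c$ for some fixed $c\in\mathbb{K}^n$.

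Given this, existence of a fixed point comes from computing $f^{n+1}$ in two ways. Since $f^n$ is the constant map with value $c$, we have $f^{n+1}=f\circ f^n\equiv f(c)$, while at the same time $f^{n+1}=f^n\circ f\equiv c$; comparing the two expressions yields $f(c)=c$, so $c$ is a fixed point. For uniqueness, suppose $f(x)=x$ for some $x\in\mathbb{K}^n$. Then $x=f^n(x)=c$, so $c$ is the only fixed point, which is the assertion of the corollary.

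I do not expect a genuine obstacle here; the argument is short, and the only place where the precise definition of the dependency graph (as opposed to the intuitive notion of variable dependence) is actually used is the implication ``zero adjacency matrix $\Rightarrow$ constant map.'' As a remark, the same reasoning gives slightly more: $f$ has no periodic orbits, since any $x$ with $f^p(x)=x$ satisfies $f^{kp}(x)=x$ for all $k\geq 1$, and choosing $k$ with $kp\geq n$ forces $x=f^{kp}(x)=c$.
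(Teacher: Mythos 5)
Your proposal is correct and follows the same route as the paper: apply Corollary~\ref{ImporCor} with the nilpotency of $[f]$ to conclude that $f^n$ is constant, and read off the unique fixed point. You simply spell out two steps the paper leaves implicit (zero adjacency matrix implies constant map, and the two-way computation of $f^{n+1}$ showing the constant value is the fixed point), which is a welcome elaboration rather than a different argument.
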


\begin{proof}

By Corollary \ref{ImporCor}, $[f^r]\preceq[f]^r=0$, so $f^r$ is constant:
$f^r\equiv x_0$. Therefore, $x_0$ is the unique fixed point of $f$.

\end{proof}

Note that the nilpotency index of $[f]$, the smallest integer $r$ for which $[f]^r=0$ and $[f]^{r-1}\neq0$, 
gives us an upper bound for the number of steps to reach the unique fixed point from any other state.

\section{Application}

In this section we will present an application of Corollary~\ref{main} to a published model
of an \textit{in vitro } competition between two strains of a murine coronavirus studied in Jarrah et. al.~\cite{Jarrah}. The model is presented as a hexagonal grid of cells, with color coding of cells to indicate their infection status. Normal cells are represented as white. Infected cells are represented as red or green, or yellow, in the case of dual infection.
The infection spreads from the center outwards. At each time step one ring of new cells is infected. 
The outcome of a cell in the new ring depends on the infection status of its two neighbors in the 
previous infected ring. The local update function for each cell is constructed using the following rules:
  
  \begin{itemize}
  \item If a cell has only one infected neighbor, it will get the same type of infection.  
  \item If a cell has two infected neighbors, then we use the following table to determine the type of infection of that cell.
\end{itemize}

\begin{center}
\begin{tabular}{| c | c | c | c | c |}
\hline
\multicolumn{5}{| c | }{Rules for the update function} \\
\hline
& Green & White & Red & Yellow \\ \hline
Green & Green & Green & Yellow & Green\\ \hline
White & Green & White & Red & Yellow \\ \hline
Red & Yellow & Red & Red & Red \\ \hline
Yellow & Green & Yellow &Red & Yellow \\ \hline
\end{tabular}
\end{center}

The dynamics of this system is represented in Figure~\ref{Evolution_without_controllers}.
In this example we will use 169 cells. In order to simulate experimental conditions described in \cite{Jarrah}
we initialize the infection using the 37 center cells. Each cell can have only one of four colors at a time. We use the field
with four elements, $\mathbb{F}_4 = \{0, 1, 2, 3\}$, to represent the set of different colors. 
The color assignment is as follows. 

\begin{center}
\begin{tabular}{| c | c | }
\hline
\multicolumn{2}{| c | }{Color assignment} \\
\hline
Color&Field element\\ \hline
Green&0\\ \hline
Red & 1 \\ \hline
White& 2\\ \hline
Yellow&3\\ \hline
\end{tabular}
\end{center}

We represent the 169 cells by the variables $x_1,\dots,x_{169}$, with $x_1,\dots,x_{37}$ representing the center cells. The variables $x_{128},\dots,x_{169}$ represents the cells in the outermost ring. 

In \cite{Jarrah} the system was studied from the point of view of the experimental system. There, a collection
of cells was infected in the center of the dish, and the infection was then observed to spread to the rest of
the Petri dish in a pattern that showed distinct segmentation, matching the experimental results. In order
to mimic the biological system, it was assumed that the cells initially infected did not change their infection
state subsequently, so that the model is constrained and heterogeneous with respect to the rules assigned
to all the nodes. 
This, in effect, changes the dynamical system since those cells are now assigned
constant functions rather than the rules described above. The outcome is a steady state that shows 
a characteristic segmentation which coincides with experimental observations. 
See Figure \ref{Evolution_without_controllers} for an example.
\begin{figure}
\begin{center}
\includegraphics[width=3in]{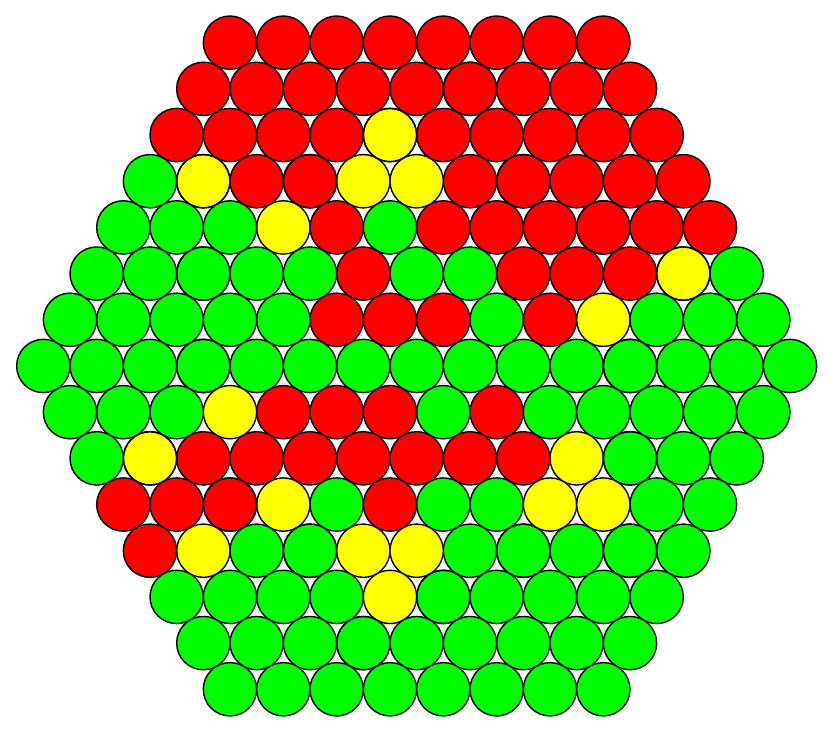}
\caption{Dynamics of the constrained model}
\label{Evolution_without_controllers}
\end{center}
\end{figure}

If, however, we allow all cells to evolve according to the update rule above, then the
outcome is significantly different. In that case, the center cell plays a very special role. It is represented by
the only node in the network that does not receive any inputs from other cells and is therefore constant. (Note that, in order
to have an acyclic dependency graph, a PDS must have at least one node without incoming edges.) The
state of that node then plays the role of an external parameter, and a particular choice of state/color for
this node is part of the system description. 
In this case Corollary \ref{main} applies, since the dependency graph of the network is acyclic. 
See Figure  \ref{AdjacencyGraph}.
\begin{figure}
\begin{center}
\includegraphics[width=3in]{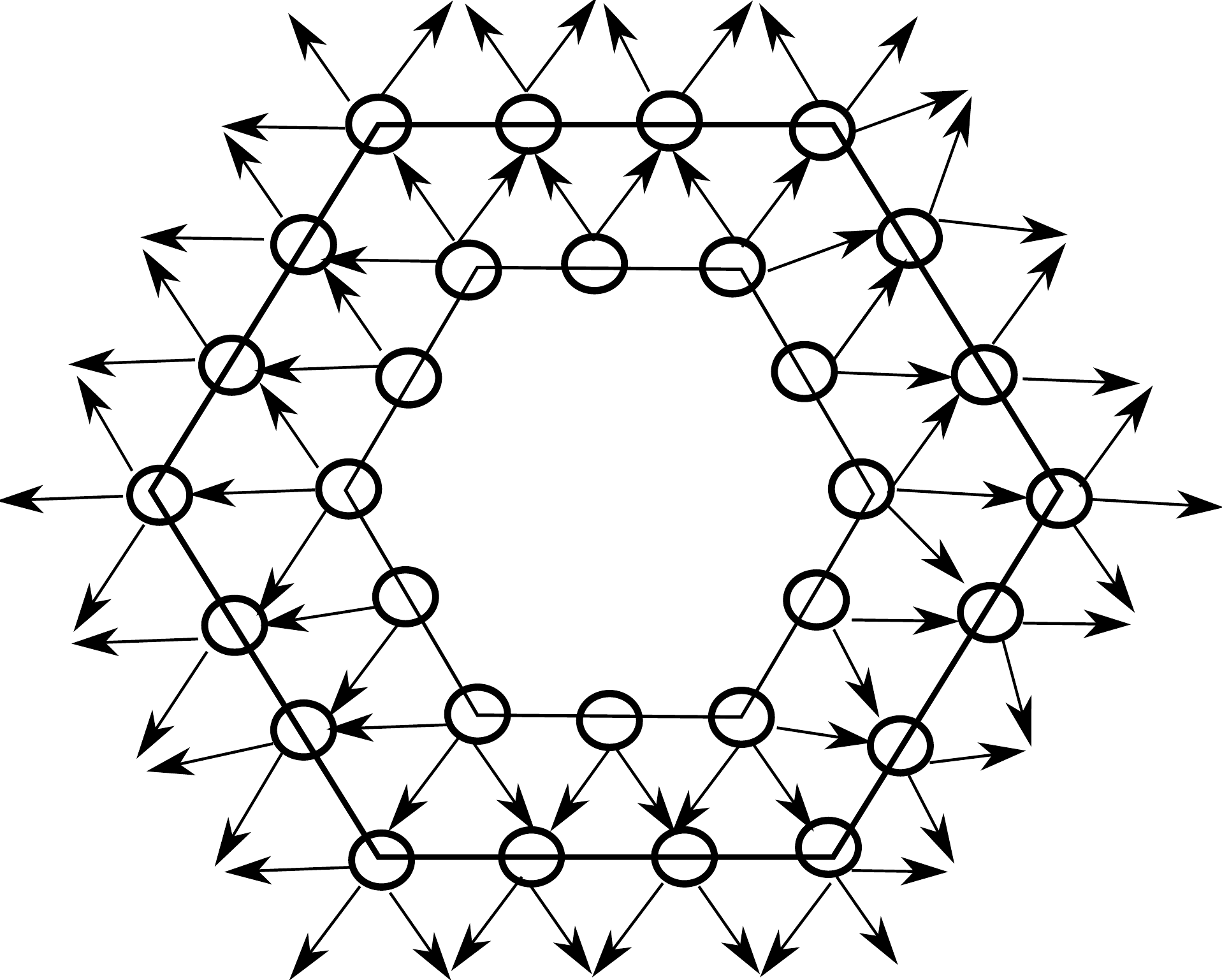}
\caption{Adjacency graph for virus competition }
\label{AdjacencyGraph}
\end{center}
\end{figure}
We conclude that in this case any initialization of the system, that
is, any assignment of colors to each of the nodes in the network, results in a unique steady state, namely the
state in which all nodes have the same color as the center node. In effect, what happens is that after initializing
the 37 nodes in the center they propagate a distinct, segmentation-like pattern. At the same time, the color
in the center cell propagates outward and overpowers the segmentation pattern, resulting in a system state
with a homogenous color distribution. See Figure  \ref{Evolution}. 

If we now assume that a larger number of cells in the center is infected, that is, is assigned a color that does not change
subsequently, as in the initial experiments, then the color distribution along the edge of the infected region propagates
and produces a steady state that show the segmentation patterns observed in \cite{Jarrah}.

The polynomial dynamical system for the model in which all cells are allowed to change is represented as  
\begin{displaymath}
f=(f_1,\dots,f_{169}):\mathbb{F}_4^{169}\rightarrow \mathbb{F}_4^{169}
\end{displaymath}
The coordinate functions $f_i$ are polynomials in $\mathbb{F}_4[x_j,x_k]$, where $x_j$ and $x_k$ are the two neighbors of $x_i$ in the previous infected ring. 
Let $x_i$ represent one of the 169 cells, then $x_i$ is updated according $f_i(x_j,x_k)$, i.e. $x_i=f_i(x_j,x_k)$, where $x_j$ and $x_k$ are the two neighbors of $x_i$ in the previous infected ring. The following table specifies part of the truth table for $f_i(x_j,x_k)$,
\begin{center}
\begin{tabular}{| c | c | c | c | c | c | }
\hline
\multicolumn{6}{| c | }{Truth Table} \\
\hline
Color&Color&$x_j$ &$x_k$ &$f(x_j,x_k)$&Color\\ \hline
Green& Green&0 &0 &0&Green\\ \hline
Green &Red  & 0&1 &3&Yellow\\ \hline
Green& White & 0&2 &0&Green\\ \hline
Green&Yellow & 0&3 &0&Green\\ \hline
Red&Green&1&0&3&Yellow\\ \hline
Red&Red&1&1&1&Red\\ \hline
Red&White&1&2&1&Red\\ \hline
Red&Yellow&1&3&1&Red\\ \hline

\end{tabular}
\end{center}

Written as a polynomial, $f_i$ has the following form
  \begin{displaymath}
  \begin{array}{lcl}
f_i(x_j,x_k)&=& x_j + 3 x_j^2 + x_j^3 + 3 x_j^4 + x_k + 3 x_j x_k +  \\ \\
     & & x_j^2 x_k + x_j^3 x_k + 2 x_j^4 x_k + 3 x_k^2 + x_j x_k^2 +  \\ \\
      && 4 x_j^2 x_k^2 + 4 x_j^4 x_k^2 + x_k^3 + 4 x_j^2 x_k^4 +\\ \\
      &&  3 x_j^3 x_k^4 + 3 x_j^4 x_k^4
\end{array}
\end{displaymath}
This polynomial form can then be used to compute the steady state configuration by
solving a system of polynomial equations. 

\begin{figure*}[h!tp]
  \begin{center}
    \subfigure
      {\includegraphics[width=.3\textwidth]{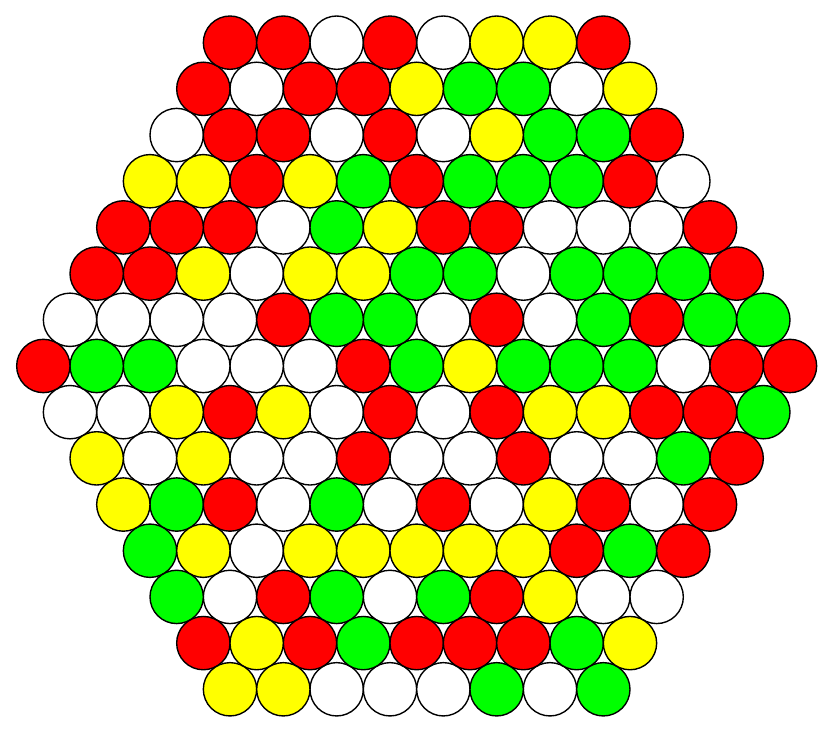}\includegraphics[width=.3\textwidth]{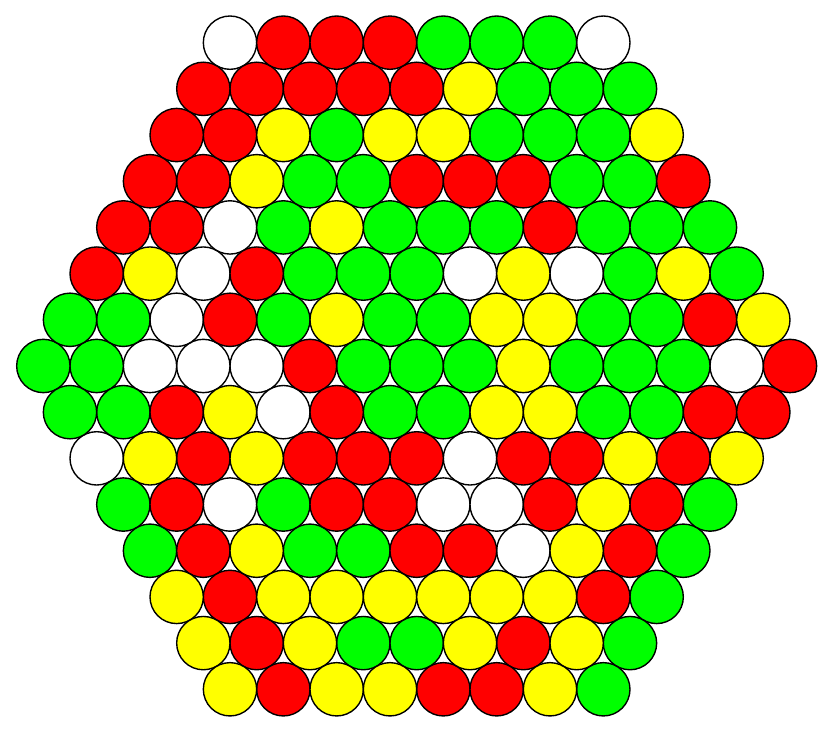}
      }
    \subfigure
      {\includegraphics[width=.3\textwidth]{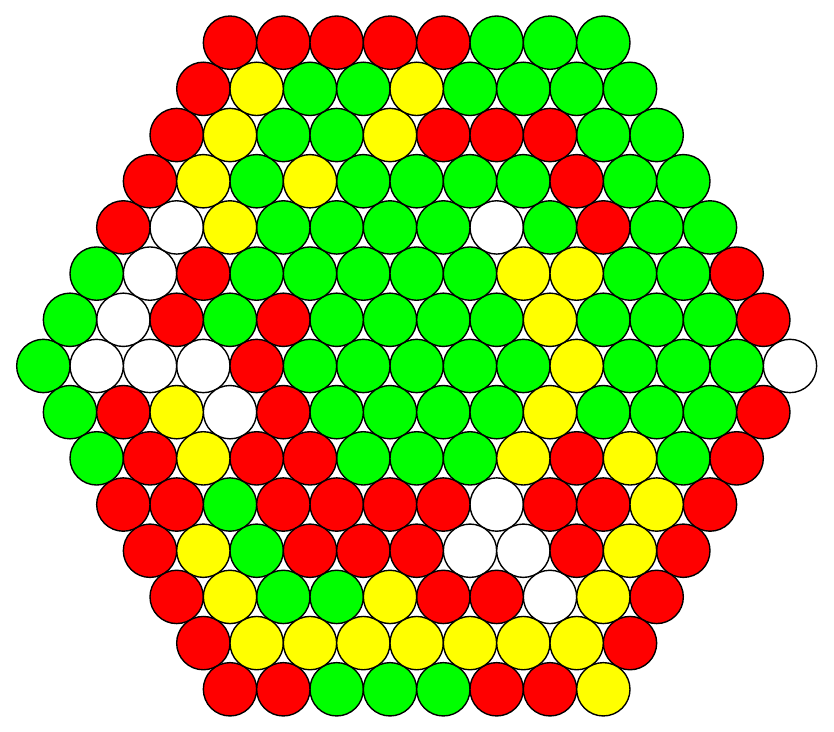}
        \includegraphics[width=.3\textwidth]{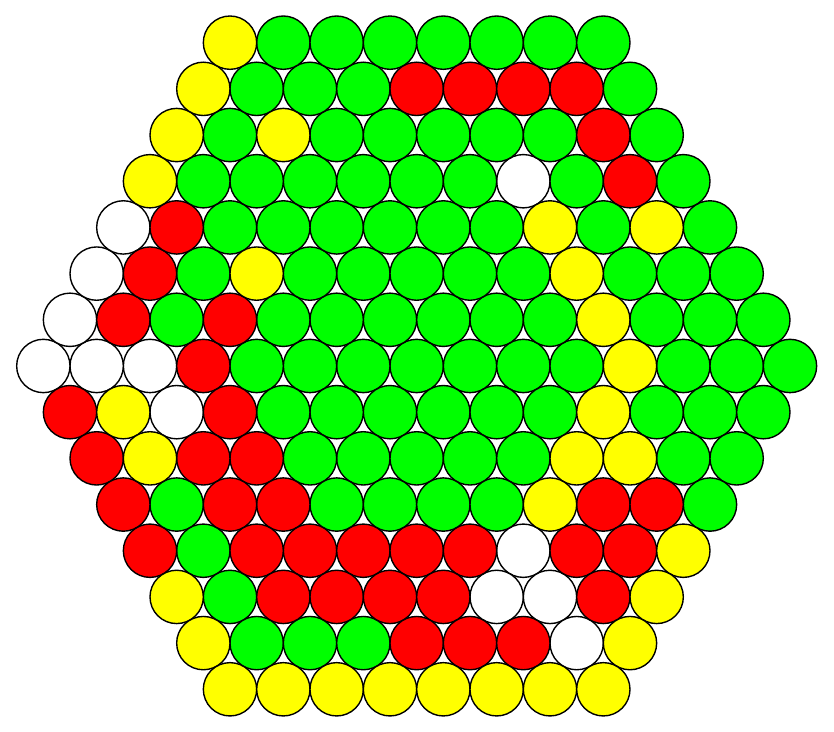}
      }
      \subfigure
      {\includegraphics[width=.3\textwidth]{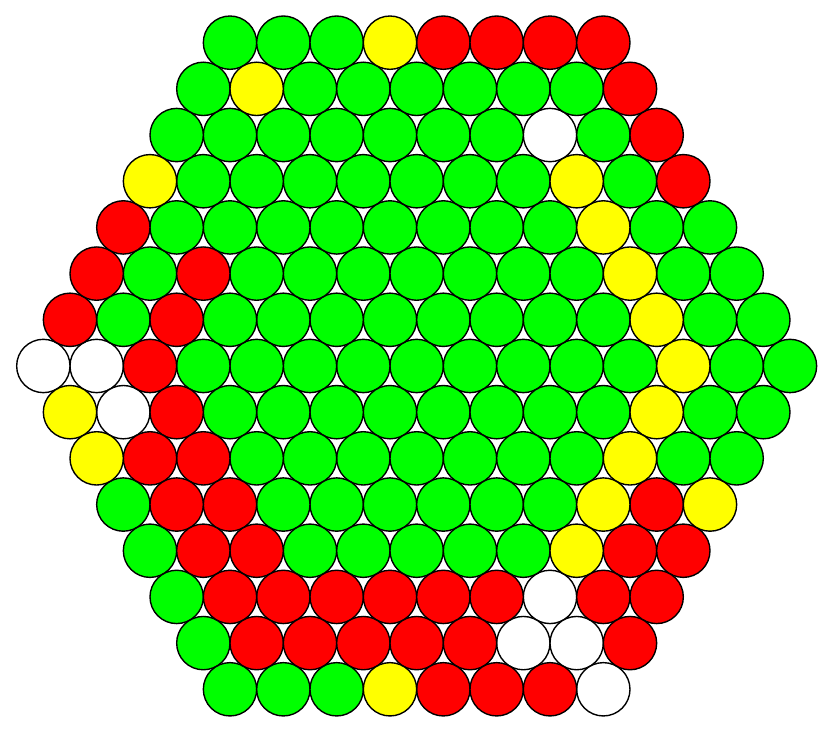}
        \includegraphics[width=.3\textwidth]{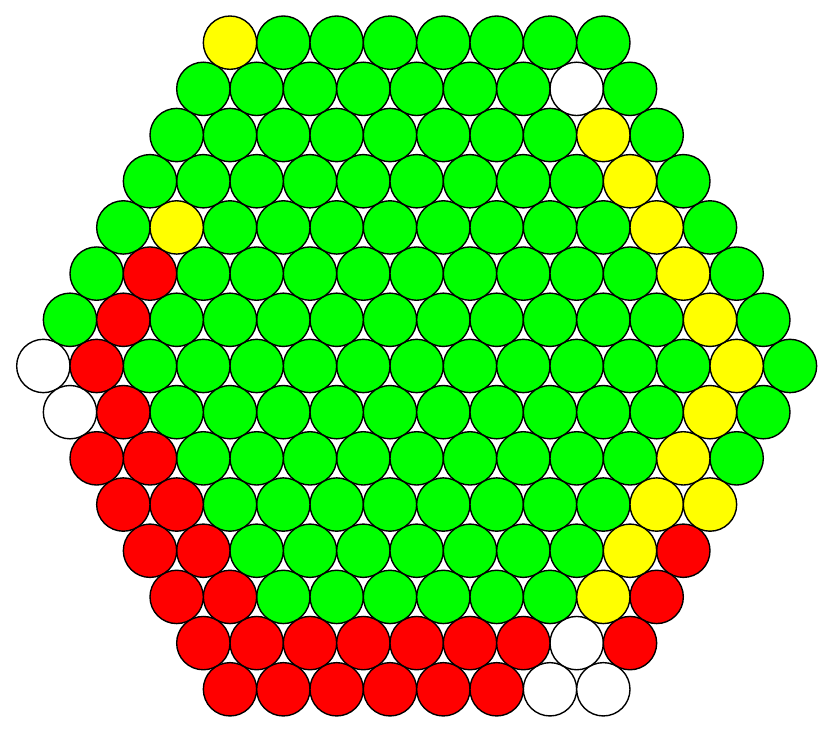}
      }
      \subfigure
      {\includegraphics[width=.3\textwidth]{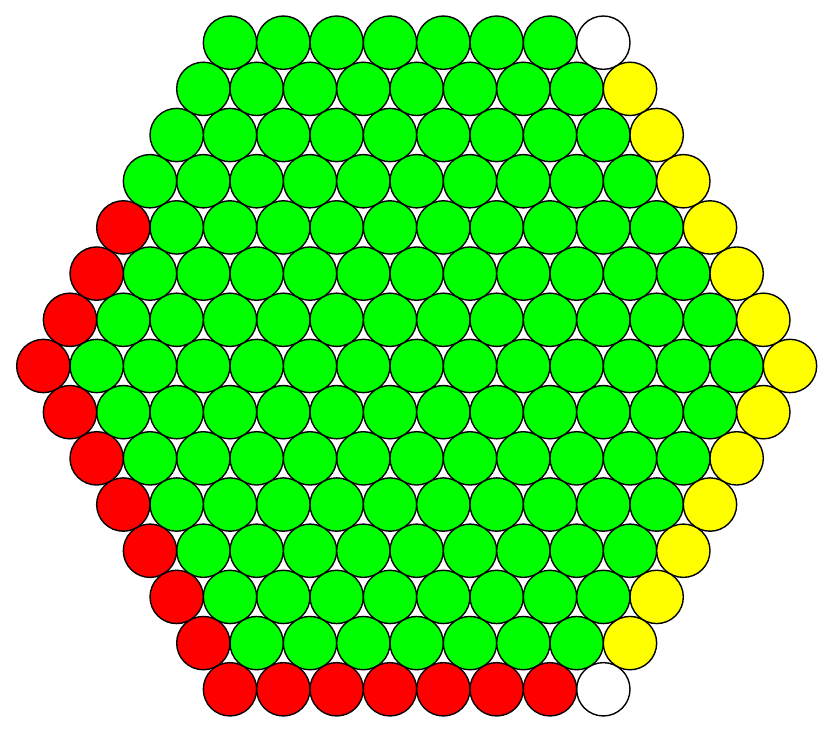}
        \includegraphics[width=.3\textwidth]{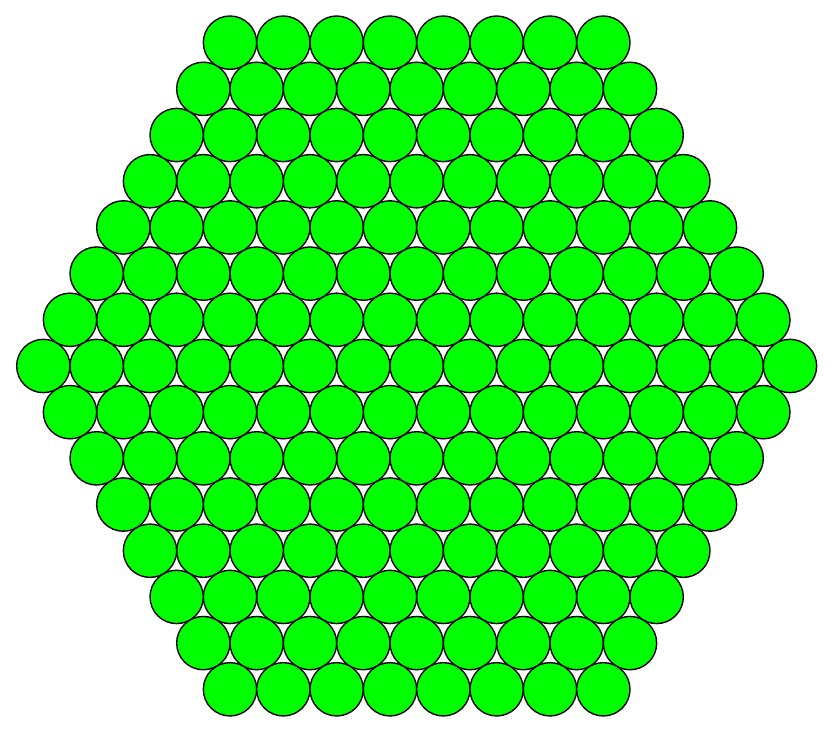}
      }
    \caption{Dynamics where all cells evolve according to the rules. The center cell determines the final outcome.}
    \label{Evolution}
  \end{center}
\end{figure*}

\section{ODEs with acyclic dependency graph}

As an aside note we will mention that our main result is also valid for dynamical systems with differential equations. We now consider ODEs that have an acyclic dependency graph, that is, no feedback loops in their structure.
For any ODE with acyclic dependency graph  we can see easily that its adjacency matrix $[f]$ is a strictly triangular matrix, i.e., with
zeros in the diagonal (otherwise, there would be loops in the graph).

\begin{thm}\label{ODE}
Consider an ODE with acyclic dependency graph, then there is a unique steady state and it is asymptotically stable.
\end{thm}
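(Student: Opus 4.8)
The plan is to exploit, exactly as in the discrete case behind Corollary~\ref{main}, the cascade structure that acyclicity forces on the vector field. First I would topologically sort the nodes of the dependency graph and relabel the variables so that $f_i$ depends only on $x_1,\dots,x_{i-1}$; by the construction of $[f]$ above this is precisely the assertion that $[f]$, and hence the Jacobian $Df(x)$ at every point $x$, is strictly lower triangular. The system then splits into a cascade $\dot x_1=f_1$, $\dot x_2=f_2(x_1)$, \dots, $\dot x_n=f_n(x_1,\dots,x_{n-1})$, in which the first equation is autonomous, the second is driven only by $x_1$, and so on down the chain.

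From here I would obtain existence and uniqueness of the steady state by back-substitution along the cascade: the first equation fixes an equilibrium value $x_1^{\ast}$, feeding it into the second fixes $x_2^{\ast}$, and inductively $f_i(x_1^{\ast},\dots,x_{i-1}^{\ast})=0$ determines $x_i^{\ast}$ once the upstream values are pinned; since at each stage the inputs have already been frozen to constants, no freedom remains, so the equilibrium is unique whenever it exists. Asymptotic stability would then follow by induction along the same chain, using the classical fact that a cascade of (locally) asymptotically stable subsystems is itself (locally) asymptotically stable: $x_1(t)\to x_1^{\ast}$, whereupon the $x_2$-equation is asymptotically autonomous with limiting equation attracted to $x_2^{\ast}$, so $x_2(t)\to x_2^{\ast}$, and, continuing coordinate by coordinate, every trajectory converges to $x^{\ast}$.

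The step I expect to be the main obstacle is reconciling this bookkeeping with the linearization. Since $Df(x^{\ast})$ is strictly triangular it is nilpotent, so all of its eigenvalues are $0$: the equilibrium is non-hyperbolic and linearization alone says nothing about stability. One must therefore either carry out the cascade/asymptotically-autonomous argument with enough uniformity to deliver Lyapunov stability and attractivity together, or make explicit the extra structure the models of interest carry --- for instance a degradation term $-\gamma_i x_i$ with $\gamma_i>0$ sitting in each $f_i$ but, being a trivial self-interaction, not drawn in the wiring diagram --- under which $Df(x^{\ast})$ becomes triangular with strictly negative diagonal, every eigenvalue has negative real part, and asymptotic stability is immediate. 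I would present the proof in whichever of these forms matches the precise class of ODEs under consideration, stating the corresponding hypothesis alongside ``acyclic dependency graph.''
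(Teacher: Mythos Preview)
Your instinct in the last paragraph is exactly what the paper does: it restricts to systems with natural decay, i.e.\ $x_i'=f_i(x_1,\dots,x_{i-1})-k_ix_i$ with $k_i>0$, and the self-loop $-k_ix_i$ is deliberately excluded from the dependency graph. So your hedging is well placed; without that extra structure the theorem as stated is simply false.

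There is, however, a slip earlier in your plan. In the pure cascade $\dot x_i=f_i(x_1,\dots,x_{i-1})$ the steady-state equation $f_i(x_1^{\ast},\dots,x_{i-1}^{\ast})=0$ does \emph{not} determine $x_i^{\ast}$: it does not contain $x_i$ at all. Either it holds (and $x_i^{\ast}$ is arbitrary) or it fails (and there is no equilibrium). So the degradation term is needed not only to rescue stability from the nilpotent linearization, but already to make your back-substitution produce a unique equilibrium, via $x_i^{\ast}=f_i(x_1^{\ast},\dots,x_{i-1}^{\ast})/k_i$.

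On the stability side, your option of invoking the triangular Jacobian with strictly negative diagonal is correct but yields only \emph{local} asymptotic stability. The paper instead proves a small lemma by the integrating-factor formula: if $y'=z(t)-ky$ with $k>0$ and $z(t)\to z_0$, then $y(t)\to z_0/k$ regardless of $y(0)$. Inducting this along the cascade gives \emph{global} convergence of every trajectory to the unique $s$, from which both uniqueness of the steady state and global asymptotic stability fall out at once. Your asymptotically-autonomous cascade sketch is in the same spirit, but the paper's explicit scalar lemma replaces the appeal to general cascade theorems with a direct estimate (splitting $\int_0^t e^{k(s-t)}(z(s)-z_0)\,ds$ at $t/2$), and thereby delivers the global statement cleanly.
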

 
To proof this we need the following lemma.

\begin{lem}
Let $z:[0,\infty)\rightarrow \mathbb{R}$ be a continuous function such that $\lim_{t\rightarrow \infty}z(t)=z_0\in \mathbb{R}$. If $y'=z-ky,y(0)=y_0$ with $k>0$, then $\lim_{t\rightarrow \infty}y(t)=\frac{z_0}{k}$ (regardless of the value of $y_0$).
\end{lem}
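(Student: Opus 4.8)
The plan is to solve the linear first-order ODE explicitly and then pass to the limit. Multiplying $y' + ky = z$ by the integrating factor $e^{kt}$ turns the left side into an exact derivative, $(e^{kt}y)' = e^{kt}z(t)$, and integrating from $0$ to $t$ gives
\[
y(t) = e^{-kt}y_0 + e^{-kt}\int_0^t e^{ks}z(s)\,ds.
\]
Since $k>0$, the first term tends to $0$ as $t\to\infty$, so the whole problem reduces to showing that the second term converges to $z_0/k$.

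For that term I would write it as the quotient $\left(\int_0^t e^{ks}z(s)\,ds\right)\big/ e^{kt}$ and apply the $\infty/\infty$ form of L'Hôpital's rule: the denominator $e^{kt}$ increases to $+\infty$, which is all the rule requires, and the ratio of derivatives is $e^{kt}z(t)\big/\bigl(ke^{kt}\bigr) = z(t)/k \to z_0/k$. To keep the argument self-contained one can instead argue directly with an $\varepsilon$-estimate: given $\varepsilon>0$, choose $T$ with $|z(s)-z_0|<\varepsilon$ for $s\ge T$, split $\int_0^t = \int_0^T + \int_T^t$, observe that $e^{-kt}\int_0^T e^{ks}z(s)\,ds\to 0$ as $t\to\infty$, bound $\bigl|e^{-kt}\int_T^t e^{ks}(z(s)-z_0)\,ds\bigr| \le \varepsilon\, e^{-kt}\int_T^t e^{ks}\,ds < \varepsilon/k$, and compute $e^{-kt}\int_T^t e^{ks}z_0\,ds = (z_0/k)\bigl(1 - e^{-k(t-T)}\bigr) \to z_0/k$. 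Letting $\varepsilon\to 0$ then yields the claim.

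Putting the pieces together gives $\lim_{t\to\infty} y(t) = z_0/k$, with no dependence on $y_0$. The one point that needs care — and the step I would flag as the main obstacle — is that $z$ is only assumed to converge, not to be constant, monotone, or integrable in any strong sense, so the integral $\int_0^t e^{ks}z(s)\,ds$ cannot be evaluated in closed form; the argument works precisely because cutting the integral at a time $T$ past which $z$ is within $\varepsilon$ of $z_0$ isolates the part where $z$ is well-behaved, while the remaining finite piece on $[0,T]$ is annihilated by the factor $e^{-kt}$. (Equivalently, one must check that L'Hôpital's hypotheses really hold, including the degenerate case $z_0=0$.) Everything beyond this is routine manipulation of the explicit solution.
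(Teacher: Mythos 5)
Your proof is correct, and your $\varepsilon$-estimate is essentially the paper's argument: both start from the explicit solution $y(t)=y_0e^{-kt}+\int_0^t e^{k(s-t)}z(s)\,ds$ and control the convolution integral by splitting it into a head that is killed by $e^{-kt}$ and a tail on which $z$ is uniformly close to $z_0$; the only cosmetic difference is that you cut at a fixed $T=T(\varepsilon)$ while the paper cuts at the moving point $t/2$ and uses $\sup_{s\in[t/2,t]}|z(s)-z_0|\to 0$. Your L'H\^opital alternative is a genuinely shorter route not taken in the paper, and you correctly invoke the strong $\infty/\infty$ form (only the denominator $e^{kt}\to\infty$ is needed, so the case $z_0=0$ or a non-divergent numerator causes no trouble); it buys brevity at the cost of relying on a version of the rule that is less universally stated, whereas the direct estimate is self-contained.
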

\begin{proof}
First, notice that from $y'=z-ky$ we obtain $y(t)=y_0 e^{-kt}+\int_0^t{e^{k(s-t)}z(s)ds}$. Furthermore, $y(t)-\frac{z_0}{k}=y_0 e^{-kt}+\int_0^t{e^{k(s-t)}(z(s)-z_0)ds}+\int_0^t{e^{k(s-t)}z_0ds}-\frac{z_0}{k}$. It is easy to show that $\lim_{t\rightarrow \infty}\int_0^t{e^{k(s-t)}z_0ds}-\frac{z_0}{k}=\lim_{t\rightarrow \infty} y_0 e^{-kt}=0$; then, we only need to show that $\lim_{t\rightarrow \infty}\int_0^t{e^{k(s-t)}(z(s)-z_0)ds}=0$. First, since $\lim_{t\rightarrow \infty}z(t)=z_0$, it follows that $z(s)$ is bounded (let us say by $M$) and $\lim_{t\rightarrow \infty}sup_{s\in [\frac{t}{2},t]}\{|z(s)-z_0|\}=0$. On the other hand we have
\begin{flushleft}
$|\int_0^t{e^{k(s-t)}(z(s)-z_0)ds}|\leq$
\end{flushleft}

$\int_0^{\frac{t}{2}}e^{k(s-t)}|z(s)-z_0|ds+\int_{\frac{t}{2}}^{t}e^{k(s-t)}|z(s)-z_0|ds$

$ \leq 2M\int_0^{\frac{t}{2}}e^{k(s-t)}ds+$

$sup_{s\in [\frac{t}{2},t]}\{|z(s)-z_0|\}\int_{\frac{t}{2}}^{t}e^{k(s-t)}ds$

$ = 2M\frac{e^{-kt/2}-e^{-kt}}{k}+$

$\ \ \ sup_{s\in [\frac{t}{2},t]}\{|z(s)-z_0|\}\frac{1-e^{-kt/2}}{k}$

$\leq 2M\frac{e^{-kt/2}}{k}+sup_{s\in [\frac{t}{2},t]}\{|z(s)-z_0|\}\frac{1}{k} $

The last expression converges to 0 as $t\rightarrow \infty$ and this finishes the proof.

\end{proof}

\begin{cor}
Consider $x:[0,\infty)\rightarrow \mathbb{R}^n$, $f:\mathbb{R}^n\rightarrow \mathbb{R}$ continuous such that $\lim_{t\rightarrow \infty}x(t)=x_0\in \mathbb{R}^n$. If $y'=f(x)-ky$ with $k>0$, then $\lim_{t\rightarrow \infty}y(t)=\frac{f(x_0)}{k}$.
\end{cor}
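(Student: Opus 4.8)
The plan is to reduce the corollary directly to the preceding Lemma by setting $z(t) := f(x(t))$. First I would verify that $z$ meets the two hypotheses of the Lemma. It is a continuous function $[0,\infty)\to\mathbb{R}$, being the composition of the continuous path $t\mapsto x(t)$ with the continuous map $f$. And it has a finite limit at infinity: fixing $\varepsilon>0$, continuity of $f$ at $x_0$ gives a $\delta>0$ with $|f(u)-f(x_0)|<\varepsilon$ whenever $\|u-x_0\|<\delta$, and since $x(t)\to x_0$ there is a $T$ with $\|x(t)-x_0\|<\delta$ for all $t\ge T$; hence $|z(t)-f(x_0)|<\varepsilon$ for $t\ge T$, so $\lim_{t\to\infty}z(t)=f(x_0)=:z_0$.

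With this in hand, the equation $y'=f(x)-ky$ is literally $y'=z-ky$ with $k>0$ and $y(0)=y_0$, so the Lemma applies verbatim and yields $\lim_{t\to\infty}y(t)=z_0/k=f(x_0)/k$, independently of $y_0$, which is exactly the assertion.

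I do not expect a genuine obstacle here: the only point needing (minor) care is confirming that the composition $f\circ x$ inherits both the continuity on $[0,\infty)$ and the convergence at infinity required to invoke the Lemma, and that is the routine $\varepsilon$--$\delta$ argument above. I would also remark, for context, that this corollary is precisely the inductive step behind Theorem~\ref{ODE}: after reordering the variables so that $[f]$ is strictly lower triangular, the first coordinate satisfies an autonomous linear equation $y_1'=c-ky_1$ and converges, and then, processing the coordinates in order, each $y_i'$ has the form $f_i(\text{already-convergent coordinates})-ky_i$, so repeated application of this corollary propagates convergence through all $n$ coordinates to a unique, asymptotically stable steady state. For the statement at hand, however, the reduction to the Lemma is the whole proof.
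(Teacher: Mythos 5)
Your proposal is correct and matches the paper's proof, which likewise sets $z(t)=f(x(t))$ and invokes the preceding lemma; you simply spell out the routine verification that $z$ is continuous with $\lim_{t\to\infty}z(t)=f(x_0)$, which the paper leaves implicit.
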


\begin{proof}
It is enough to consider $z(t)=f(x(t))$ and apply the lemma above.
\end{proof}

When modeling biological systems using ODEs, it is common for the functions to have natural decay; that is, they are of the form $x_i'=f(x)-k_ix_i$ for $i=1,\ldots,n$. The dependency graph of such an ODE is the graph with nodes $\{1,\ldots,n\}$ and an edge from $i$ to $j$ if $f_j$ depends on $x_i$.

\begin{proof}[Proof of theorem~\ref{ODE}]
Consider an ODE with natural decay. Without loss of generality, we consider that the adjacency matrix of the dependency graph is of the form
\[
\left[ \begin{array}{ccccc} 0 & 0  & 0 & \ldots & 0  \\ * & 0 & 0 & \ldots & 0 \\ * & * & 0 & \ldots & 0 \\ \vdots &    &   & \ddots  &   \\ * & * & \ldots & * & 0 \end{array} \right]
\]
That is, $f_i=f_i(x_1,\ldots,x_{i-1})$ ($f_i$ could depend on less variables). For $i=1$ we have $x_1'=f_1-k_1x_1$ where $f_1$ is constant. Then, $\lim_{t\rightarrow \infty}x_1(t)=s_1$, where $s_1=\frac{f_1}{k_1}$. For $i=2$ we have $x_2'=f_2(x_1)-k_2x_2$; then, by the corollary, we have that $\lim_{t\rightarrow \infty}x_2(t)=s_2$, where $s_2=\frac{f_2(s_1)}{k_2}$. By induction, if $\lim_{t\rightarrow \infty}(x_1(t),\ldots,x_{i-1}(t))=(s_1,\ldots,s_{i-1})$ and $x_i'=f_i(x_1,\ldots,x_{i-1})-k_ix_i$, then $\lim_{t\rightarrow \infty}x_i(t)=s_i$, where $s_i=\frac{f_i(s_1,\ldots,s_{i-1})}{k_i}$. At the end, we have $s=(s_1,\ldots,s_n)\in\mathbb{R}^n$ such that $\lim_{t\rightarrow \infty}x(t)=s$ (regardless of the value of $x(0)$). It follows that $s$ is the unique steady state and that it is asymptotically stable.
\end{proof}
 
\section{Discussion}
In this paper we have shown that it is fruitful to study the relationship between the structure and the dynamics
of discrete dynamical systems by looking at the algebraic properties of the mapping 
$\mathcal{P}\longrightarrow\mathcal{M}$ from the algebra of PDS to the algebra of adjacency matrices. 
With a very straightforward proof we have shown that dynamics is very simple in the absence of feedback loops. 
It is worth observing that this result, Corollary \ref{main}, 
could also have been obtained as a consequence of a more general
result that says that for the existence of more than one fixed point, a positive feedback loop is required
in the dependency graph, and for periodic orbits to exist a negative feedback is necessary (but not sufficient).
This result implies that in order to obtain more than one fixed point or periodic orbits, it is necessary that the
dependency graph of the system have feedback loops \cite{Soule}. However, the proof we have given here
of this same result is very simple and stems from a basic property of the mapping 
$\mathcal{P}\longrightarrow\mathcal{M}$ rather than complicated phase space arguments. It emphasizes our
belief that the proper framework for studying the relationship between structure and dynamics
of PDS is the algebra inherent in this mapping. 

Furthermore, we have shown that one can use Corollary \ref{main} to draw non-obvious conclusions about
a system of interest. This conclusion would be very difficult to arrive at through simulations, due to the
combinatorial complexity of the dynamics on a large grid. 


\end{document}